\documentclass[11pt]{article}
\usepackage{hyperref}
\usepackage{fullpage}
\usepackage{latexsym}
\usepackage{theorem}
\usepackage{amstext}
\usepackage{xspace}
\setlength{\parskip}{4pt}

\usepackage{verbatim}
\usepackage{amssymb}
\usepackage{amsmath}
\usepackage{ifthen}

\newtheorem{theorem}{Theorem}[section]

\newtheorem{remark}{Remark}
\newtheorem{proposition}{Proposition}

\newenvironment{proof}{\noindent {\bf Proof.  }}{\hfill$\Box$}

\def \poly { \text{\rm poly} }

\def \C {{\mathbb C}}
\def \Z {{\mathbb Z}}

\def \Sig[#1] {{\sf \Sigma}_{#1} }

\def \Pie[#1] {{\sf \Pi}_{#1} }

\begin{document}

\title{Finding a path of length $k$ in $O^*(2^k)$ time}

\author{Ryan Williams\thanks{Current address: School of Mathematics, Institute for Advanced Study, Princeton, NJ. Email: {\tt ryanw@math.ias.edu}. This research was partially supported by the National Science Foundation under CCR-0122581, while the author was at Carnegie Mellon.}\\Carnegie Mellon University}

\date{}

\maketitle

\begin{abstract} We give a randomized algorithm that determines if a given graph has a simple path of length at least $k$ in $O(2^k \cdot \poly(n,k))$ time.\end{abstract}


\section{Introduction}

The $k$-path problem is to determine if a given graph contains a simple path of length at least $k$, and if so, produce such a path. When $k$ is given as a part of the input, the problem is well-known to be {\sf NP}-complete. The general problem has many practical applications (cf.~\cite{TSP,Scott}).  The trivial algorithm enumerating all possible $k$-paths in an $n$-node graph uses $\Theta(n^k)$ time, so it is only polynomial for $k = O(1)$. The first algorithm to reduce the runtime dependency on $k$ was given by Monien~\cite{Monien} whose algorithm runs in $O^*(k!)$ time (the $O^*$ notation suppresses $\poly(n,k)$ factors). Hence the case $k \leq (\log n)/(\log \log n)$ is still polynomial time solvable. For years it was not known if the $O(\log n)$-path problem was in polynomial time; a breakthrough by Alon, Yuster, and Zwick~\cite{AYZ} finally led to such an algorithm. They gave a randomized algorithm running in $O^*((2e)^k) \leq O^*(5.44^k)$ time, and a deterministic $O^*(c^k)$ time algorithm, where $c$ is a large constant. Since it has been known for many years prior that when $k=n$ the problem is solvable in $O^*(2^k)$ time~\cite{Bellman,Held,Karp}, it is natural to ask if there is an algorithm that can match this runtime for all values of $k$.

It has been only recently that faster $k$-path algorithms have appeared in the literature. In 2006, two groups independently discovered $O^*(4^k)$ randomized algorithms and $O^*(c^k)$ deterministic algorithms, with $c = 16$~\cite{Kneis} and $c = 12.5$~\cite{Chen}. Very recently, Koutis~\cite{Koutis} presented a novel randomized algorithm for $k$-path that runs in $O^*(2^{3k/2}) \leq O^*(2.83^k)$ time. In this note, we extend his result to obtain an $O^*(2^k)$ time algorithm. Koutis shows how to detect if a graph has a $k$-subgraph with an odd number of $k$-paths in $O^*(2^k)$ time. By augmenting his approach with more random choices and some additional ideas, we can find a $k$-path in roughly the same running time. As mentioned above, the best known algorithms for finding a Hamilton path in an $n$-node graph run in $O^*(2^n)$ time and are quite old. Therefore any significant improvement in the runtime dependence on $k$ given by our algorithm would imply a faster Hamilton path algorithm, which has been an open problem for over forty years. We do not wish to insist that our algorithm is optimal, but rather that further progress would entail a substantial breakthrough in algorithms for {\sf NP}-hard problems.

\section{Preliminaries}

Most of our notation is standard, however we do require some notions from algebra that are not often used in graph algorithms. Let $F$ be a field and $G$ be a multiplicative group (its binary operation is written as a multiplication). The group algebra $F[G]$ is an algebraic object that incorporates properties of both objects, defined as follows. Elements of $F[G]$ have the form \[\sum_{g \in G} a_g g,\] where each $a_g \in F$. That is, the elements are formal sums over the group elements, with coefficients from the field. Addition in $F[G]$ is defined in a point-wise manner: \[\left(\sum_{g \in G} a_g g\right) + \left(\sum_{g \in G} b_g g\right) = \sum_{g \in G}(a_g + b_g)g.\] Multiplication has the form of a convolution: \[\left(\sum_{g \in G} a_g g\right)\cdot \left(\sum_{g \in G} b_g g\right) = \sum_{g \in G}\left(\sum_{h \in G} a_{h}b_{h^{-1}g}\right)g.\] Note the above definition coincides with the one in~\cite{Koutis}. The above operations define a ring with $0$ and $1$, where $0 \in F[G]$ is the element $\sum_{g \in G} a_g g$ such that all $a_g$ are equal to $0 \in F$, and $1 \in F[G]$ is the multiplicative identity $1 \in G$ of the group.

In our algorithm, we work over the group algebra $GF(2^{\ell})[\Z_2^{k}]$, for particular integers $k,\ell \geq 0$. Here $\Z_2^{k}$ is the group of binary $k$-vectors, endowed with componentwise addition modulo $2$ as its operation.  $GF(2^{\ell})$ is the unique field on $2^{\ell}$ elements. We use $W_0$ to denote the all-zeros vector (the identity) of $\Z_2^k$. Note that every $v \in \Z_2^{k}$ is its own inverse: $v^2 = W_0$. Every element in the algebra has the form $\sum_{v \in \Z_2^{k}} a_v v$, where $a_v \in GF(2^{\ell})$.

\paragraph{Example} The elements of $GF(2^2)$ can be represented as the four polynomials $0,1,x,1+x$ over $GF(2)$, where computations are done modulo $x^2+x+1$. For example, $x^3 = x\cdot x^2 = x \cdot (1+x) = x + x^2 = 1$ in $GF(2^2)$. Over $GF(2^2)[\Z^{3}_2]$, \[\left(\left[\begin{array}{c} 0 \\ 0 \\ 0 \end{array}\right] + x \left[\begin{array}{c} 1\\ 0 \\ 1 \end{array}\right]\right) + \left(\left[\begin{array}{c} 0 \\ 0 \\ 0 \end{array}\right] + \left[\begin{array}{c} 1 \\ 0 \\ 1 \end{array}\right] + \left[\begin{array}{c} 1 \\ 1 \\ 1 \end{array}\right] \right) = (1+x)\left[\begin{array}{c} 1 \\ 0 \\ 1 \end{array}\right] + \left[\begin{array}{c} 1 \\ 1 \\ 1 \end{array}\right]\] and over $F[\Z^3_2]$ in general, \[\left(a_1 \left[\begin{array}{c} 0 \\ 0 \\ 0 \end{array}\right] + a_2 \left[\begin{array}{c} 1\\ 0 \\ 1 \end{array}\right]\right) \cdot \left(b_1 \left[\begin{array}{c} 0 \\ 0 \\ 0 \end{array}\right] + b_2 \left[\begin{array}{c} 1 \\ 0 \\ 1 \end{array}\right] + b_3 \left[\begin{array}{c} 1 \\ 1 \\ 1 \end{array}\right] \right)\]\[~~~ = (a_1 b_1 + a_2 b_2)\left[\begin{array}{c} 0 \\ 0 \\ 0 \end{array}\right] + a_2 b_3 \left[\begin{array}{c} 0 \\ 1 \\ 0 \end{array}\right] + (a_1 b_2 + a_2 b_1)\left[\begin{array}{c} 1 \\ 0 \\ 1 \end{array}\right] + a_1 b_3 \left[\begin{array}{c} 1 \\ 1 \\ 1 \end{array}\right].\]

\section{Algorithm for the $k$-Path Problem}

Fix an underlying graph $G$ in the following, with vertex set $\{1,\ldots,n\}$. Let $F$ be a field, let $A$ be the adjacency matrix of $G$, and let $x_1,\ldots,x_n$ be variables. Define a matrix $B[i,j] = A[i,j] x_i$. Let $\vec{\bf 1}$ be the row $n$-vector of all $1$'s, and $\vec{x}$ be the column vector defined by $\vec{x}[i] = x_i$. Define the $k$-walk polynomial to be $P_k(x_1,\ldots,x_k) = \vec{\bf 1} \cdot B^{k-1} \cdot \vec{x}$.

\begin{proposition} \[P_k(x_1,\ldots,x_k) = \sum_{i_1,\ldots,i_k\text{ is a walk in }  G} x_{i_1} \cdots x_{i_k}.\]
\end{proposition}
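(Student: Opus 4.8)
The plan is to prove this by induction on $k$, unwinding the matrix product $\vec{\bf 1} \cdot B^{k-1} \cdot \vec{x}$ entry by entry and matching monomials to walks. First I would observe that, by the definition of matrix-vector multiplication, $\vec{\bf 1}\cdot B^{k-1}\cdot \vec{x} = \sum_{i_1,i_k} (B^{k-1})[i_1,i_k]\, x_{i_k}$, so the heart of the matter is to identify $(B^{k-1})[i_1,i_k]$ with $\sum x_{i_1}x_{i_2}\cdots x_{i_{k-1}}$ where the sum ranges over all walks $i_1,i_2,\ldots,i_k$ in $G$ of length $k-1$ (i.e.\ $k$ vertices). This is the standard fact that powers of an adjacency matrix count walks, refined to carry the vertex-labeled monomial: since $B[i,j] = A[i,j]x_i$, a product $B[i_1,i_2]B[i_2,i_3]\cdots B[i_{k-1},i_k]$ equals $x_{i_1}x_{i_2}\cdots x_{i_{k-1}}$ if $i_1 i_2\cdots i_k$ is a walk, and $0$ otherwise.

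More concretely, I would set up the induction with the claim that $(B^{m})[i_1,i_{m+1}] = \sum_{i_1,\ldots,i_{m+1} \text{ walk}} x_{i_1}\cdots x_{i_m}$ for all $m \geq 0$ (with the $m=0$ base case being the identity matrix, corresponding to the empty product and the trivial one-vertex walk). For the inductive step, write $(B^{m+1})[i_1,i_{m+2}] = \sum_{i_{m+1}} (B^m)[i_1,i_{m+1}]\, B[i_{m+1},i_{m+2}]$, apply the inductive hypothesis to the first factor, and note $B[i_{m+1},i_{m+2}] = A[i_{m+1},i_{m+2}]x_{i_{m+1}}$: the factor $A[i_{m+1},i_{m+2}]$ is exactly the indicator that the walk $i_1,\ldots,i_{m+1}$ can be extended by the edge $(i_{m+1},i_{m+2})$, and the factor $x_{i_{m+1}}$ supplies the missing variable so that the monomial becomes $x_{i_1}\cdots x_{i_{m+1}}$. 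Summing over $i_{m+1}$ then ranges over exactly the walks of length $m+1$. Taking $m = k-1$ and then multiplying by $\vec{x}$ (which appends the final factor $x_{i_k}$) and by $\vec{\bf 1}$ on the left (which sums over the start and end vertices $i_1, i_k$) yields the claimed identity.

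Honestly there is no serious obstacle here; the only thing requiring a little care is bookkeeping — making sure the walk $i_1,\ldots,i_k$ has exactly $k$ vertices and that the monomial $x_{i_1}\cdots x_{i_k}$ picks up the start vertex from the leftmost $B$ factor and the end vertex from the trailing $\vec{x}$, rather than double-counting or dropping an endpoint. I would state the induction over $B^m$ as a small auxiliary claim to keep this clean, and then the proposition follows in one line. It may also be worth remarking explicitly that the sum is over \emph{all} walks, including those that revisit vertices, so that a monomial $x_{i_1}\cdots x_{i_k}$ is \emph{squarefree} precisely when the walk is a simple path — this is the observation that motivates working in a group algebra where repeated variables can be made to cancel, which is presumably where the argument goes next.
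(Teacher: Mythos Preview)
Your argument is correct and is exactly the standard unwinding of the matrix product; the paper itself gives no proof of this proposition, treating it as immediate from the definitions of $B$ and $P_k$. Your induction on $m$ for $(B^m)[i_1,i_{m+1}]$ is the natural way to make the ``obvious'' claim rigorous, and your bookkeeping about which endpoint contributes which variable is accurate.
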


Clearly, there is a $k$-path in $G$ iff $P_k(x_1,\ldots,x_n)$ contains a multilinear term. We give a randomized algorithm $R$ with the property that:

\begin{itemize}
\item if $P_k$ has a multilinear term, then $\Pr[R~\text{outputs {\em yes}}] \geq 1/5$,

\item if $P_k$ does not have a multilinear term, then $R$ always outputs {\em no}.
\end{itemize}

In fact, the statement we can prove is more general.

\begin{theorem} Let $P(x_1,\ldots,x_n)$ be a polynomial of degree at most $k$, represented by an arithmetic circuit of size $s(n)$ with $+$ gates (of unbounded fan-in), $\times$ gates (of fan-in two), and no scalar multiplications. There is a randomized algorithm that on every $P$ runs in $O^*(2^k s(n))$ time, outputs {\em yes} with high probability if there is a multilinear term in the sum-product expansion of $P$, and always outputs {\em no} if there is no multilinear term.
\end{theorem}

\begin{remark}\label{rem} We may assume without loss of generality that every multilinear term of $P$ has degree at least $k$, and at least one multilinear term has degree exactly $k$. If not, let $k' < k$ be the minimum degree of a multilinear term in $P$. We can try all $j = 1,\ldots,k$ and multiply the final output of the circuit for $P$ by $j$ new variables $x_{n+1},\ldots,x_{n+j}$, obtaining a polynomial $P^j$, which we feed to the randomized algorithm. Note that when $j = k-k'$, our assumption holds.
\end{remark}

By observing that $P_k$ can be implemented with a circuit of size $O(k(m+n))$ where $m$ is the number of edges in $G$, the $k$-path algorithm is obtained. We begin the proof with a description of the algorithm. The basic idea is to substitute random group algebra elements for the variables such that all non-multilinear terms in $P$ evaluate to zero and some multilinear terms survive. Then we augment the scalar-free multiplication circuit with random scalar multiplications over a field large enough that the remaining multilinear polynomial evaluates to nonzero with decent probability. Set $F = GF(2^{3 + \log k})$.

\paragraph{Algorithm} {\em Pick $n$ uniform random vectors $v_1,\ldots,v_n$ from $\Z_2^k$. For each multiplication gate $g_i$ in the circuit for $P$, pick a uniform random $w_i \in F\setminus\{0\}$. Insert a new gate that multiplies the output of $g_i$ with $w_i$, and feeds the output to those gates that read the output of $g_i$. Let $P'$ be the new polynomial represented by this arithmetic circuit.\footnote{In the evaluation of the $k$-path polynomial $P_k$, the algorithm corresponds to picking random $y_{i,j,c}$ in $F$ for $c=1,\ldots,k-1$, $i,j=1,\ldots,n$, letting $B_c[i,j] = y_{i,j,c}B[i,j]$, then evaluating $P'_k(x_1,\ldots,x_n) = \vec{\bf 1} \cdot B_{k-1} \cdots B_1 \cdot \vec{x}$ on the appropriate vectors.} Output {\em yes} iff $P'(W_0+v_1,\ldots,W_0+v_n) \neq 0$.}

\paragraph{Runtime} Let us describe one way to implement the algorithm efficiently. The only non-trivial step is the final polynomial evaluation. By definition, the evaluation of $P'(W_0+v_1,\ldots,W_0+v_n)$ takes $O(s(n))$ arithmetic operations. However, since evaluation takes place over $F[\Z_2^{k}]$, we need to account for the cost of arithmetic in the group algebra. Elements in $F[\Z_2^{k}]$ can be naturally interpreted as vectors in $F^{2^k}$. Addition of these vectors (as elements in $F[\Z_2^{k}]$) can be done in $O(2^k \log |F|)$ time with a component-wise sum. Multiplication of vectors $u$ and $v$ over the group algebra can be done in $O(k 2^k \log^2 |F|)$ time by a Fast Fourier Transform style algorithm, as we now describe.

For simplicity, let $\ell = 3 + \log_2 k$. Represent elements of $F = GF(2^{\ell})$ as univariate polynomials over $GF(2)$ of degree at most $\ell$ in the usual way, so the entries of $u$ and $v$ are degree-$\ell$ polynomials. Over the ring $\C[x]$, multiply $u$ and $v$ with the matrix $H_{k}$ for the discrete Fourier transform on $\Z_2^{k}$ (also called the Walsh-Hadamard transform) in $O(k 2^k M(\ell))$ time (cf. \cite{MaslenRockmore}) where $M(d)$ is the runtime for computing the product of two degree-$d$ univariate polynomials over $GF(2)$. Since $\ell$ is small, it suffices to use the bound $M(\ell) \leq O(\ell^2)$. Take the pointwise product of the two resulting vectors obtaining a vector $w$, and multiply $w$ with $H_{k}$ (note $H_{k}^{-1} = H_{k}$, so this is the inverse of the transform). The resulting vector $x$ contains $2^k$ polynomials of degree at most $2\ell$. Reduce each polynomial modulo an irreducible degree-$\ell$ polynomial over $GF(2)$, in $O(2^k M(\ell))$ time. (For a discussion of how to obtain irreducible polynomials, cf.~\cite{Shoup}.) This has the effect of mapping our results in $\C[x]$ back down to $GF(2^{\ell})$. Overall, the evaluation of $P'$ takes at most $O^*(2^k s(n))$ time.

We note that while the above computation naively needs $\Omega(2^k)$ space, detecting if $P'$ evaluates to zero can be done in $O(\poly(n,k))$ space using (for example) the representation-theoretic technique of Koutis~\cite{Koutis}. For the sake of brevity, we will not concern ourselves with this issue.

\paragraph{Correctness} The crucial observation of Koutis~\cite{Koutis} is that, for any $v_i \in \Z_2^k$,  \[(W_0+v_i)^2 = W_0^2 + 2 v_i + v_i^2 = W_0 + 0 + W_0 = 0 \mod 2.\] Therefore all squares in $P$ vanish in $P'(W_0+v_1,\ldots,W_0+v_n)$, since $F$ has characteristic $2$. It follows that if $P(x_1,\ldots,x_n)$ does not have a multilinear term, then $P'(W_0+v_1,\ldots,W_0+v_n) = 0$ over $F[\Z_2^k]$, regardless of the choice of $v_i$.

In the remaining paragraphs, we prove that if the sum-product expansion of $P(x_1,\ldots,x_n)$ has a multilinear term, then $P'(W_0+v_1,\ldots,W_0+v_n) \neq 0$ with probability at least $1/5$, over the random choices of $w_i$'s and $v_i$'s. By Remark~\ref{rem}, we may assume that every multilinear term in the sum-product expansion of $P$ has the form $c \cdot x_{i_1} \cdots x_{i_{k'}}$ where $k' \geq k$ and $c \in \Z$. For each such term, there is a corresponding collection of multilinear terms in $P'$, each of the form \[w_1\cdots w_{k'-1} \prod_{j=1}^{k'} (W_0+v_{i_j}),\] where the sequence $w_1,\ldots, w_{k'-1}$ is distinct for every term in the collection (as the sequences of multiplication gates $g_1,\ldots,g_{k'-1}$ are distinct). Note these terms do not have leading coefficients, since there are no scalar multiplications in the arithmetic circuit.

Consider a monomial $\prod_{j=1}^i (W_0+v_{j})$ in the polynomial (disregarding the $w_i$'s for the moment). Koutis~\cite{Koutis} proved that if the $i$ vectors $v_1,\ldots,v_i$ are linearly dependent, this monomial vanishes modulo $2$. We observe that his proof works over any field of characteristic two.

\begin{proposition}[Koutis] \label{dep} If $v_{1},\ldots,v_{i} \in \Z_2^{k}$ are linearly dependent over $GF(2)$, then $\prod_{j=1}^i (W_0+v_{j}) = 0$ in $F[\Z_2^{k}]$.\end{proposition}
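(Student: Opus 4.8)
The plan is to reduce to a single nonempty dependent sub-family and then expand. Since $\Z_2^k$ is abelian, $F[\Z_2^k]$ is commutative, so the factors $(W_0+v_j)$ may be multiplied in any order. Linear dependence of $v_1,\dots,v_i$ over $GF(2)$ means there is a nonempty set $S\subseteq\{1,\dots,i\}$ with $\sum_{j\in S}v_j=W_0$. Writing $\prod_{j=1}^i(W_0+v_j)=\big(\prod_{j\in S}(W_0+v_j)\big)\cdot\big(\prod_{j\notin S}(W_0+v_j)\big)$, it suffices to show that the first factor is $0$.

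For the main step I would expand $\prod_{j\in S}(W_0+v_j)$ by distributivity. Using that in $F[\Z_2^k]$ the product of two group basis elements $u,v$ is the single basis element $u+v$ (componentwise addition mod $2$), and that $W_0$ is the identity, each of the $2^{|S|}$ terms of the expansion is a group element $g_T:=\sum_{j\in T}v_j$ for $T\subseteq S$, taken with coefficient $1$. Thus $\prod_{j\in S}(W_0+v_j)=\sum_{T\subseteq S}g_T$. Now pair every $T$ with its complement $S\setminus T$: since $\sum_{j\in S}v_j=W_0$ and every element of $\Z_2^k$ is its own inverse, $g_{S\setminus T}=g_T$. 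Because $S\neq\emptyset$, complementation in $S$ is a fixed-point-free involution on the subsets of $S$ (a set and its complement are disjoint and cover $S$), so the $2^{|S|}$ summands split into $2^{|S|-1}$ pairs of equal elements of $F[\Z_2^k]$. As $F$ has characteristic $2$, each pair contributes $0$, so the whole sum vanishes, giving $\prod_{j\in S}(W_0+v_j)=0$ and hence $\prod_{j=1}^i(W_0+v_j)=0$.

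I do not expect a genuine obstacle here; the two points needing a line of care are (i) that multiplication in $F[\Z_2^k]$ sends a pair of group basis elements to exactly one group basis element, so that the expansion really is a sum of basis vectors with unit coefficients rather than a more complicated combination, and (ii) that the complementation map on subsets of $S$ has no fixed point, which uses $S\neq\emptyset$. An alternative would be an induction on $|S|$ driven by the identity $(W_0+v)^2=0$, but the pairing argument is shorter and is essentially Koutis's original argument, now read over an arbitrary field of characteristic two rather than over $GF(2)$.
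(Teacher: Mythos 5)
Your proof is correct and follows essentially the same route as the paper: extract a nonempty subset of the $v_j$ summing to $W_0$, expand the corresponding product of $(W_0+v_j)$ into a sum of group elements indexed by subsets, and pair each subset with its complement so the terms cancel in characteristic $2$. Your explicit remarks that complementation is a fixed-point-free involution (using the subset's nonemptiness) and that commutativity lets you isolate the dependent sub-product only make precise what the paper leaves implicit.
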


\begin{proof} If $v_1,\ldots,v_i$ are linearly dependent, there is a nonempty subset $T$ of the vectors that sum to the all-zeros vector. In $F[\Z_2^{k}]$, this is equivalent to \[\prod_{j \in T} v_j = W_0.\] Let $S \subseteq T$ be arbitrary. Multiplying both sides by $\prod_{j \in (S \Delta T)} v_j$, \[\prod_{j \in S} v_j = \prod_{j \in (S \Delta T)} v_j.\] Therefore $\prod_{j \in T} (W_0+v_{j}) = \sum_{S \subseteq T} \left(\prod_{j \in S} v_{j}\right) = 0 \mod 2$, since each product appears twice in the sum. Hence $\prod_{j=1}^i (W_0+v_{j}) = 0$ over $F[\Z_2^k]$, since $F$ is characteristic $2$.\end{proof}

Therefore linearly dependent vectors lead to a cancellation of terms. On the other hand, when $v_1,\ldots,v_i$ are linearly independent, $\prod_{j=1}^i (W_0+v_{j})$ is just the sum over all vectors in the span of $v_1,\ldots,v_i$, since each vector in the span is of the form $\prod_{j \in S} v_j$ for some $S \subseteq [i]$, and there is a unique way to generate each vector in the span.

\begin{proposition} \label{indep} If $v_1,\ldots,v_k \in \Z_2^{k}$ are linearly independent over $GF(2)$, then $\prod_{j=1}^k (W_0+v_j) = \sum_{v \in \Z_2^k} v$.\end{proposition}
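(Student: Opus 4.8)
The plan is to expand the product by distributivity and show that the expansion enumerates every vector of $\Z_2^k$ exactly once. Writing out $\prod_{j=1}^k (W_0 + v_j)$ via the distributive law, the terms are indexed by subsets $S \isin [k]$: for each $j$ we select either the summand $W_0$ or the summand $v_j$, and choosing $v_j$ precisely for $j \in S$ yields the term $\prod_{j \in S} v_j$, since $W_0$ is the multiplicative identity $1 \in F[\Z_2^k]$. Thus $\prod_{j=1}^k (W_0+v_j) = \sum_{S \isin [k]} \prod_{j \in S} v_j$.

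Next I would translate each such product back into a single group element. In $F[\Z_2^k]$ the product of the group elements $v_j$ (for $j \in S$) is just their product in the group $\Z_2^k$, which is componentwise addition modulo $2$; hence $\prod_{j \in S} v_j = \sum_{j \in S} v_j$, a vector in $\Z_2^k$ (with the empty product giving $W_0$). So the expansion becomes a formal sum $\sum_{S \isin [k]} \left(\sum_{j \in S} v_j\right)$ of $2^k$ vectors, each appearing with coefficient $1 \in F$.

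The final step is a counting/bijection argument. Since $v_1,\ldots,v_k$ are linearly independent over $GF(2)$ and $\Z_2^k$ has dimension $k$, they form a basis; therefore the map $S \mapsto \sum_{j \in S} v_j$ from $2^{[k]}$ to $\Z_2^k$ is a bijection (it is surjective because the $v_j$ span, and both sets have size $2^k$). Consequently every $v \in \Z_2^k$ occurs exactly once in the formal sum, giving $\prod_{j=1}^k (W_0 + v_j) = \sum_{v \in \Z_2^k} v$ as claimed.

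I do not expect a serious obstacle here: the only point requiring care is that, in contrast to the linearly dependent case of Proposition~\ref{dep} where the terms cancel in pairs modulo $2$, here no cancellation occurs because the bijection hits each vector exactly once, so the coefficients remain $1$ rather than collapsing to $0$. This is precisely the special case $i = k$ of the observation preceding the proposition, in which the span of $v_1,\ldots,v_k$ is all of $\Z_2^k$.
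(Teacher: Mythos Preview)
Your proof is correct and is essentially the same argument the paper gives: expand the product over subsets $S \subseteq [k]$, identify $\prod_{j\in S} v_j$ with the group element $\sum_{j\in S} v_j$, and use linear independence to conclude that $S \mapsto \sum_{j\in S} v_j$ is a bijection onto $\Z_2^k$. The paper states this reasoning in the sentence immediately preceding the proposition and leaves the proposition itself unproved.
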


By Propositions~\ref{dep} and \ref{indep}, and the fact that any $k' > k$ vectors are linearly dependent, $P'(W_0+v_1,\ldots,W_0+v_n)$ evaluates to either $0$, or $c \sum_{v \in \Z_2^{k}} v$ for some $c \in F$. The final piece of our argument shows that if $P$ has a multilinear term, then $c \neq 0$ with probability at least $1/5$.

The vectors $v_{\ell_1},\ldots,v_{\ell_k}$ chosen for the variables in a multilinear term of $P$ are linearly independent with probability at least $1/4$, because the probability that a random $k \times k$ matrix over $GF(2)$ has full rank is at least $0.28 \geq 1/4$~(cf.~\cite{BlumKannan},~Lemma~6.3.1). Hence in $P'(W_0+v_1,\ldots,W_0+v_n)$, there is at least one multilinear term in $P$ corresponding to a set of $k$ linearly independent vectors, with probability at least $1/4$.

Let $S$ be the set of those multilinear terms in $P$ which correspond to $k$ linearly independent vectors in $P'(W_0+v_1,\ldots,W_0+v_n)$. Then the above coefficient $c = \sum_{i} c_i$ for some $c_i \in F$ corresponding to the $i$th multilinear term in $S$. Conditioned on $S \neq \varnothing$, we claim that $\sum_{i} c_i = 0$ with probability at most $1/2^3$. Each coefficient $c_i$ comes from a sum of products of $k-1$ elements $w_{i,1},\ldots,w_{i,k-1}$ corresponding to some multiplication gates $g_{i,1},\ldots,g_{i,k-1}$ in the circuit. (In the $k$-path case, note that each $c_i$ is a sum of products of the form $y_{i_1,i_2,1} y_{i_2,i_3,2} \cdots y_{i_{k-1},i_k,k-1}$.) Construing the $w_i$'s as variables, the sum $Q(w_1,\ldots,w_{s(n)}) = \sum_i c_i$ is a {\em degree-$k$ polynomial over $F$} in the variables. Assuming $S \neq \varnothing$, $Q$ is not identically zero. (Note each monomial in $Q$ has coefficient $1$.) By the Schwartz-Zippel Lemma~\cite{MR}, the algorithm's random assignment to the variables of $Q$ results in an evaluation of $0 \in F$ with probability at most $k/|F| = 1/2^3$. Since $S \neq \varnothing$ with probability at least $1/4$, the overall probability of success is at least $1/4 \cdot (1 - 1/2^3) > 1/5$.

\paragraph{Constructing a Path} The algorithm $R$ merely detects if a graph has a $k$-path. We note that an $O^*(2^k)$ algorithm producing a $k$-path (when it exists) is easy to obtain; let us briefly outline one possible algorithm of this kind. For an arbitrary node $v_i$, we remove $v_i$ from the graph and run the $k$-path detection algorithm for $O(\log n)$ trials, using new random bits for each trial. If the algorithm outputs yes in some trial, we recursively call our algorithm on the graph with $v_i$ removed, returning the $k$-path that it returns. Otherwise, we add $v_i$ back to the graph and move to the next candidate node $v_{i+1}$, noting that such a move occurs at most $k$ times (with high probability). Hence we can bound the runtime with the recurrence \[T(n) \leq O^*(2^k \cdot k \log n) + T(n-1),\] which is $O^*(2^k)$. The overall probability of error can be bounded by a constant less than $1$, since the probability that all $O(\log n)$ trials result in error is inversely polynomial in $n$.

\section{Conclusion}

We end with two interesting open questions. We conjecture that both can be answered affirmatively.
\begin{itemize}

\item Let $G$ be a graph with arbitrary costs on its edges. The {\sc Short Cheap Tour} problem is to find a path of length at least $k$ where the total sum of costs on the edges is minimized. This problem is fixed-parameter tractable, in fact: \begin{theorem} {\sc Short Cheap Tour} can be solved in $O^*(4^k)$ time by a randomized algorithm that succeeds with high probability. \end{theorem}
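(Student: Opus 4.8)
The plan is to reduce \textsc{Short Cheap Tour} to computing the minimum cost of a simple path on exactly $k$ vertices, and to solve that by the \emph{divide-and-color} method of~\cite{Kneis,Chen}, propagating minimum costs through the recursion. By scaling we may assume the edge costs are non-negative integers; then any simple path of length more than $k$ contains a length-exactly-$k$ sub-path of no greater cost, so it suffices to compute $\mathrm{OPT} = \min\{\mathrm{cost}(p) : p\text{ a simple }k\text{-vertex path in }G\}$ and a path attaining it.

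The recursive object I would maintain is, for every ordered pair $(a,b)$ of vertices and every $j \le k$, a value $c_j(a,b)$ which with high probability (over the random bits used so far) equals the minimum cost of a simple $j$-vertex path from $a$ to $b$, and which never underestimates that quantity. The cases $j \in \{1,2\}$ are immediate. For $j \ge 3$, repeat $\Theta(2^j \cdot \poly(n,k))$ times, with fresh random bits each time: pick a uniformly random partition $V = V_1 \sqcup V_2$; recursively compute $c_{\lceil j/2\rceil}$ inside $G[V_1]$ and $c_{\lfloor j/2\rfloor}$ inside $G[V_2]$; and for all $a \in V_1$, $b \in V_2$ lower the running value of $c_j(a,b)$ to $\min_{(u,v)\in E,\, u\in V_1,\, v\in V_2}\big(c_{\lceil j/2\rceil}(a,u)+\mathrm{cost}(u,v)+c_{\lfloor j/2\rfloor}(v,b)\big)$, and symmetrically with $V_1$ and $V_2$ interchanged. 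Since $V_1 \cap V_2 = \varnothing$, the two sub-paths combined here are vertex-disjoint and the joining edge makes their concatenation a simple $j$-vertex path, so $c_j(a,b)$ is never pushed below the true minimum. Conversely, if an optimal simple $j$-vertex path from $a$ to $b$ exists, then in each iteration the random partition puts its first $\lceil j/2\rceil$ vertices in $V_1$ and its last $\lfloor j/2\rfloor$ in $V_2$ with probability $2^{-j}$, and when it does and both recursive calls are correct the update sets $c_j(a,b)$ to the optimum; hence after $\Theta(2^j\cdot\poly(n,k))$ iterations $c_j$ is exactly optimal except with probability $1/(4^k\poly(n,k))$. At the top level $\mathrm{OPT} = \min_{a,b} c_k(a,b)$.

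For the running time, the recursion has depth $O(\log k)$ and the work obeys $T(j) \le O(2^j)\cdot\big(2\,T(\lceil j/2\rceil) + \poly(n,k)\big)$, which unrolls to $T(k) = O^*\!\big(2^{\,k+k/2+k/4+\cdots}\big) = O^*(2^{2k}) = O^*(4^k)$, the $2^{O(\log k)}$ branching factor and all accumulated polynomial overheads being absorbed into $O^*$. The $\poly(n,k)$ factor in the iteration count is exactly what boosts the success probability of each invocation to $1-1/(4^k\poly(n,k))$, so that a union bound over the $O^*(4^k)$ invocations in the recursion tree leaves the whole computation correct with high probability. Recovering an actual optimal path rather than just $\mathrm{OPT}$ is the routine self-reduction of storing a witnessing concatenation alongside each $c_j(a,b)$.

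I expect the main difficulty to be this probabilistic bookkeeping rather than any single clever step: the ``combine'' rule is sound precisely because disjointness of $V_1$ and $V_2$ forces every concatenation to be simple, and one then has to union-bound failure over a recursion tree of size $\Theta(4^k)$ while keeping the amplification at each node to only a polynomial factor. A secondary point worth spelling out is the reduction to non-negative integer costs with length exactly $k$: it is routine when costs are non-negative, but genuinely arbitrary (possibly negative) costs leave the optimal length unbounded, so that case of the reduction requires separate care.
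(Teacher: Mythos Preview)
Your approach is exactly what the paper has in mind: the paper omits the proof entirely and merely remarks that the algorithm is ``a simple extension of the divide-and-color approach for solving $k$-path~\cite{Kneis},'' which is precisely the cost-propagating random-bipartition recursion you describe. One small point to tighten is the bookkeeping between your stated $\Theta(2^j\cdot\poly(n,k))$ repetitions and the recurrence you actually unroll with an $O(2^j)$ factor: if the per-level amplification genuinely carries an $n$-dependent polynomial it compounds multiplicatively over the $\log k$ levels to $n^{\Theta(\log k)}$, so the clean statement is to use $O(2^j)$ repetitions at each level (giving constant success probability by the usual fixed-point argument) and amplify once at the top --- your recurrence and its $O^*(4^k)$ solution are then correct as written.
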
 We omit the proof here; our algorithm is a simple extension of the divide-and-color approach for solving $k$-path~\cite{Kneis}. Can {\sc Short Cheap Tour} be solved in $O^*(2^k)$ time? The algorithm of this paper does not appear to extend to weighted graphs.

\item Is there a {\em deterministic} algorithm for $k$-path with the same runtime complexity as our algorithm? This question was also raised by Koutis~\cite{Koutis}, however our algorithm looks more difficult to derandomize. Our argument relies on the fact that polynomial identity testing is in {\sf RP}, and it is known that a polytime derandomization of this would imply strong circuit lower bounds~\cite{IK}.

\end{itemize}

\section{Acknowledgements}

I am very grateful to Yiannis Koutis for sharing an early preprint of his paper, and for several valuable discussions on his work. I also thank Andreas Bj\"{o}rklund and the anonymous referees for useful comments.

\end{document}